\newtheorem{theorem}{Theorem}
\newtheorem{corollary}{Corollary}[theorem]
\newtheorem{lemma}{Lemma}
\newtheorem{definition}{Definition}
\newtheorem{example}{Example}
\title{Descriptional Complexity of Non-Unary Self-Verifying Symmetric Difference Automata}
\author{Laurette Marais
\institute{Department of Computer Science, Stellenbosch University, South Africa}
\institute{Meraka Institute, CSIR, South Africa}
\and
Lynette van Zijl
\institute{Department of Computer Science, Stellenbosch University, South Africa}}
\begin{document}
\maketitle

\begin{abstract}
Previously, self-verifying symmetric difference automata were defined and a tight bound of $2^{n-1}-1$ was shown for state complexity in the unary case. We now consider the non-unary case and show that, for every $n\geq 2$, there is a regular language $\mathcal{L}_{n}$ accepted by a non-unary self-verifying symmetric difference nondeterministic automaton with $n$ states, such that its equivalent minimal deterministic finite automaton has $2^{n-1}$ states. Also, given any SV-XNFA with $n$ states, it is possible, up to isomorphism, to find at most another $|GL(n,\mathbb{Z}_2)|-1$ equivalent SV-XNFA.
\end{abstract}
\section{Introduction}
\label{sec:intro}
Symmetric difference nondeterministic finite automata (XNFA) are interesting from a state complexity point of view. Determinising XNFA is done via the subset construction as for NFA, but instead of taking the union of sets, the symmetric difference is taken. This means that $2^n-1$ is an upper bound on the state complexity of XNFA. This has been shown to be a tight bound for unary alphabets\cite{VanZijl2004}.

Self-verifying automata (SV-NFA) were described in  \cite{Assent2007,Hromkovic1999,Jiraskova2011} as having two kinds of final states: accept states and reject states. Non-final states are called neutral states. It is required that for any word, at least one such a final state is reached, and that only one kind of final state is reached on any path, so that any word is either explicitly accepted or explicitly rejected by the automaton. It was shown in \cite{Jiraskova2011} that $e^{\Theta\sqrt{n\ \ln n}}$ is an upper bound for the unary case, but not a tight bound, while in the non-unary case, $g(n)$, where $g(n)$ grows like $3^{\frac{n}{3}}$, is a tight upper bound.

In \cite{Marais2016}, we extended the notion of self-verification (SV) to XNFA to obtain SV-XNFA. We showed that $2^n-1$ is not a tight upper bound for SV-XNFA in the case of a unary alphabet. A lower bound of $2^{n-1}-1$ was established for the unary case, and we showed this to be a tight bound in \cite{Marais2017}.

In this paper, we now consider the state complexity of SV-XNFA with \textit{non-unary} alphabets. We give an upper bound of $2^n-1$ and a lower bound of $2^{n-1}$.

Furthermore, any XNFA can be transformed into an equivalent XNFA by performing a change of basis operation \cite{VanderMerwe2012}. We show that this holds also for SV-XNFA, and that for any given SV-XNFA, up to isomorphism, at most another $|GL(n,\mathbb{Z}_2)|-1$ equivalent SV-XNFA can be found.

\section{Preliminaries} 
\label{sec:prelim}
An NFA $N$ is a five-tuple $N=(Q,\Sigma,\delta,Q_0,F)$, where $Q$ is a finite set of states, $\Sigma$ is a finite alphabet, $\delta:Q\times\Sigma\rightarrow 2^Q$ is a transition function (where $2^Q$ indicates the power set of $Q$), $Q_0\subseteq Q$ is a set of initial states, and $F\subseteq Q$ is the set of final, or acceptance, states. The transition function $\delta$ can be extended to strings in the Kleene closure $\Sigma^*$ of the alphabet. Let $w = \sigma_0\sigma_1\ldots \sigma_k$, then
\begin{displaymath}
\delta'(q,w) = \delta'(q,\sigma_0\sigma_1\cdots \sigma_k)=\delta(\delta(\cdots\delta(q,\sigma_0),\sigma_1),\ldots,\sigma_k)\ .
\end{displaymath}
For convenience, we write $\delta(q,w)$ to mean $\delta'(q,w)$.

An NFA $N$ is said to accept a string $w\in\Sigma^*$ if $q_0\in Q_0$ and $\delta(q_0,w)\in F$, and the set of all strings (also called words) accepted by $N$ is the language $\mathcal{L}(N)$ accepted by $N$.  Any NFA has an equivalent DFA which accepts the same language.  The DFA $N_D = (Q_D, \Sigma,\delta_D,Q_{0D},F_D)$ that is equivalent to a given NFA is found by performing the subset construction~\cite{Hopcroft1990}. In essence, the subset construction keeps track of all the states that the NFA may be in at the same time, and forms the states of the equivalent DFA by a grouping of the states of the DFA.  In short,
\begin{displaymath}
\delta_D(A,\sigma)=\bigcup_{q\in A}\delta(q,\sigma)
\end{displaymath}
for any $A\subseteq Q$ and $\sigma\in\Sigma$. Any $A$ is a final state in the DFA if $A \cap F \neq \emptyset$.

\subsection{Symmetric difference automata (XNFA)}
\label{subsec:xnfa}
A symmetric difference NFA (XNFA) is defined similarly to an NFA (including the extended transition function over $\Sigma^*$ as for NFA), except that the DFA equivalent to the XNFA is found by taking the symmetric difference (in the set theoretic sense) in the subset construction. That is, for any two sets $A$ and $B$, the symmetric difference is given by $\oplus(A,B)=(A\cup B)\setminus (A\cap B)$.   The subset construction is then applied as  
\begin{displaymath}
\label{eq:subset:star}
\delta_D(A,\sigma)=\bigoplus_{q\in A} \delta(q,\sigma)
\end{displaymath}
for any $A\subseteq Q$ and $\sigma\in\Sigma$.

For clarity, the DFA equivalent to an XNFA $N$ is termed an XDFA and denoted with $N_D$, where $N_D = (Q_D,\delta_D,Q_{0D},F_D)$. Note that $\delta_D: 2^Q \times \Sigma \rightarrow 2^Q$. It is customary to require that an XDFA final state consist of an odd number of final XNFA states, as an analogy to the symmetric difference set operation~\cite{Vuillemin2009} -- this is known as parity acceptance. XNFA accept the class of regular languages \cite{Vuillemin2009}.

Given parity acceptance, XNFA have been shown to be equivalent to weighted automata over the finite field of two elements, or GF(2) \cite{VanderMerwe2012, Vuillemin2009}. For an XNFA $N = (Q,\Sigma,\delta,Q_0,F)$, the transitions for each alphabet symbol $\sigma$ can be represented as a matrix over GF(2). Each row represents a mapping from a state $q\in Q$ to a set of states $P\in 2^Q$. $P$ is written as a vector with a one in position $i$ if $q_i\in P$, and a zero in position $i$ if $q_i\not\in P$. Hence, the transition table is represented as a matrix $M_\sigma$ of zeroes and ones (see Example~\ref{ex:binary-xnfa-gf2n}). This is known as the characteristic or transition matrix for $\sigma$ of the XNFA. In the rest of this paper, we consider only SV-XNFA with non-singular matrices, whose cycle structures do not include transient heads, i.e. states that are only reached once before a cycle is reached.

Initial and final states are similarly represented by vectors, and appropriate vector and matrix multiplications over GF(2) represent the behaviour of the XNFA\footnote{In GF(2), $1+1=0$.}. For instance, in the unary case we would have a single matrix $M_a$ that describes the transitions on $a$ for some XNFA with $n$ states. We encode the initial states $Q_0$ as vector of length $n$ over GF(2), namely $v(Q_0) = [q_{0_0} \; q_{0_1} \; \cdots \; q_{0_{n-1}}]$, where $q_{0_i} = 1$ if $q_i \in Q_0$ and $0$ otherwise. Similarly, we encode the final states as a length $n$ vector $v(F) = [q_{F_0} \; q_{F_1} \; \cdots \; q_{F_{n-1}}]$. Then $v(Q_0)M_a$ is a vector that encodes the states reached after reading the symbol $a$ exactly once, and $v(Q_0)M_a^k$ encodes the states reached after reading the symbol $a$ $k$ times. The weight of a word $w_k$ of length $k$ is given by
$$\Delta(w_k) = v(Q_0)M_a^kv(F)^T \ .$$
We can say that $M_a$ represents the word $a$, and $M_{a^k} = M_a^k$ represents the word $a^k$. In the binary case, we would have two matrices, $M_a$ for transitions on $a$ and $M_b$ for transitions on $b$. Reading an $a$ corresponds to multiplying by $M_a$, while reading a $b$ corresponds to multiplying by $M_b$. Let $M_w$ be the result of the appropriate multiplications of $M_a$ and $M_b$ representing some $w \in \{a,b\}^*$, then the weight of $w$ is given by $\Delta(w) = v(Q_0)M_wv(F)^T$.

We now show that, in the unary case, a so-called change of basis is possible, where for some $n \times n$ transition matrix $M_a$ of an XNFA and any non-singular $n \times n$ matrix $A$, $M'_a = A^{-1}M_aA$ is the transition matrix of an equivalent XNFA with $v(Q'_0) = v(Q_0)A$ and $v(F')^T = A^{-1}v(F)^T$. For any word $w_k$ of length $k$, we have the following:
\begin{align*}
\Delta'(w_k) &= v(Q'_0)M'^k_a v(F')^T \\
             &= v(Q_0)A (A^{-1}M_aA)^k A^{-1}v(F)^T \\
             &= v(Q_0)M_a^k v(F)^T \\
             &= \Delta(w_k)\ \ .
\end{align*}
This also applies to the binary case. For some XNFA $N$, let $M_w = \prod_{i=1}^k M_{\sigma_i}$ represent a word $w = \sigma_1\sigma_2...\sigma_k$, where $M_{\sigma_i} = M_a$ if $\sigma_i = a$, and similarly for $b$. Now, let $N'$ be an XNFA whose transition matrices are $M'_a = A^{-1}M_aA$ and $M'_b = A^{-1}M_bA$ for some non-singular $A$. Then $w$ is represented by

\begin{align*}
M'_w &= \prod_{i=1}^k M'_{\sigma_i} \\
     &= M'_{\sigma_1} M'_{\sigma_2} \cdots M'_{\sigma_k} \\
     &= (A^{-1}M_{\sigma_1}A) (A^{-1}M_{\sigma_2}A) \cdots (A^{-1}M_{\sigma_k}A) \\
     &= A^{-1} M_{\sigma_1} M_{\sigma_2} \cdots M_{\sigma_k} A \\
     &= A^{-1} M_w A \ .
\end{align*}
And so the weight of any word $w_k$ on $N'$ is
\begin{align*}
\Delta'(w) &= v(Q'_0)M'_wv(F')^T \\
             &= v(Q_0)A (A^{-1}M_wA)A^{-1}v(F)^T \\
             &= v(Q_0)M_wv(F)^T \\
             &= \Delta(w) \ .
\end{align*}
Note that the above discussion does not rely on the fact that there are only two alphabet symbols, and so applies in general to the $r$-ary case as well.

\subsection{Self-verifying automata (SV-NFA)}
\label{subsec:sv-nfa}

Self-verifying NFA (SV-NFA)~\cite{Assent2007,Hromkovic1999,Jiraskova2011} are automata with two kinds of final states, namely accept states and reject states, as well as neutral non-final states. It is required that for any word, one or more of the paths for that word reach a single kind of final state, i.e. either accept states or reject states are reached, but not both. Consequently, self-verifying automata reject words explicitly if they reach a reject state, in contrast to NFA, where rejection is the result of a failure to reach an accept state.
\begin{definition}
\label{def:svnfa}
A  6-tuple 
$N = (Q, \Sigma, \delta, Q_0, F^a, F^r)$ 
is a self-verifying nondeterministic finite automaton (SV-NFA), where $Q,\Sigma, \delta$ and $Q_0$ are defined as for standard NFA. $F^a \subseteq Q$ and $F^r \subseteq Q$ are the sets of accept and reject states, respectively. The remaining states, that is, the states belonging to $Q \setminus (F^a \cup F^r)$, are called neutral states. For each input string $w$ in $\Sigma^*$, it is required that there exists at least one path ending in either an accept or a reject state; that is, $\delta (q_0,w) \cap (F^a \cup F^r) \neq \emptyset$ for any $q_0 \in Q_0$, and there are no strings $w$ such that both $\delta(q_0,w) \cap F^a$ and $\delta(q_1,w) \cap F^r$ are nonempty, for any $q_0,q_1 \in Q_0$.
\end{definition}
Since any SV-NFA either accepts or rejects any string $w \in \Sigma^*$ explicitly, its equivalent DFA must do so too. The path for each $w$ in a DFA is unique, so each state in the DFA is an accept or reject state. Hence, for any DFA state $d$, there is some SV-NFA state $q_i \in d$ such that $q_i \in F^a$ (and hence $d \in F^a_D$) or $q_i \in F^r$ (and hence $d \in F^r_D$). Since each state in the DFA is a subset of states of the SV-NFA, accept and reject states cannot occur together in a DFA state. That is, if $d$ is a DFA state, then for any $p,q \in d$, if $p \in F^a$ then $q \notin F^r$ and vice versa. We refer to the equivalent DFA of some SV-XNFA as its equivalent SV-XDFA to indicate that every state must accept or reject and that parity acceptance holds given the subset construction. Any SV-XDFA is equivalent to an XDFA, so SV-XNFA accept the class of regular languages.

\subsection{Self-verifying symmetric difference automata (SV-XNFA)}
\label{subsec:sv-xnfa}
In \cite{Marais2016}, self-verifying symmetric difference automata (SV-XNFA) were defined as a combination of the notions of symmetric difference automata and self-verifying automata, but only the unary case was examined. We now restate the definition of SV-XNFA in order to present results on larger alphabets in Section~\ref{sec:bin-svxnfa}. Note, however, that the definition is slightly amended: in \cite{Marais2016}, the implicit assumption was made that no SV-XNFA state could be both an accept state and a reject state. This assumption is explored in detail for the unary case in \cite{Marais2017}, but for our current purposes it suffices to say that such a requirement removes the equivalence between XNFA and weighted automata over GF(2), which is essential for certain operations on XNFA, such as minimisation \cite{VanderMerwe2012}. This implies that parity acceptance applies to SV-XNFA, where the condition for self-verification (SV-condition) is that for any word, an odd number of paths end in either accept states or reject states, but not both. In terms of the equivalent XDFA, this is equivalent to requiring that any XDFA state contain either an odd number of accept states or an odd number of rejects states, but not both. If an XNFA state is both an accept state and a reject state, it contributes to both counts.

\begin{definition}
\label{def:svxnfa}
A 6-tuple $N = (Q, \Sigma, \delta, Q_0, F^a, F^r)$
is a self-verifying symmetric difference finite automaton (SV-XNFA), where $Q,\Sigma, \delta$ and $Q_0$ are defined as for XNFA, and $F^a$ and $F^r$ are defined as for SV-NFA, except that $F^a \cap F^r$ need not be empty.   That is, each state in the SV-XDFA equivalent to $N$ must contain an odd number of states from either $F^a$ or $F^r$, but not both, and some SV-XNFA states may belong to both $F^a$ and $F^r$.
\end{definition}
The SV-condition for XNFA implies that if a state in the SV-XDFA of an SV-XNFA $N$ contains an odd number of states from $F^a$, it may also contain an even number of states from $F^r$, and hence belong to $F^a_D$, and vice versa. An SV-XDFA state may contain any number of neutral states from $N$.

The choice of $F^a$ and $F^r$ for a given SV-XNFA $N$ is called an \textit{SV-assignment} of $N$. An SV-assignment where either $F^a$ or $F^r$ is empty, is called a \textit{trivial SV-assignment}. Otherwise, if both $F^a$ and $F^r$ are nonempty, the SV-assignment is non-trivial.

\section{XNFA and linear feedback shift registers}
\label{sec:uxnfa-lfsr}
In \cite{VanZijl2001} it is shown that unary XNFA are equivalent to linear feedback shift registers (LFSRs). Specifically, a matrix $M$ with characteristic polynomial $c(X)$ is associated with a certain cycle structure of sets of XNFA states (or of XDFA states), and the choice of $Q_0$ determines which cycle represents the behaviour of a specific unary XNFA. The cycle structure is induced by $c(X)$, so any matrix that has $c(X)$ as its characteristic polynomial has the same cycle structure, although the states occurring in the cycles differ according to each specific matrix.

For the $r$-ary case, the transition matrix for each symbol is associated with its own cycle structure, and the choice of $Q_0$ determines which cycle is realised in the $r$-ary XNFA for each symbol. There are $2^n-1$ possible choices for $Q_0$ (we exclude the empty set). Evidently, the cycles associated with each symbol might overlap, and so the structure of the $r$-ary XNFA would not be cyclic itself, although the transitions for each symbol would exhibit cyclic behaviour. Specifically, for an $r$-ary XNFA $N$ and some symbol $\sigma \in \Sigma$, we refer to the cycle structure of $N$ on $\sigma$ as the cycle structure resulting from considering only transitions on $\sigma$. Our main results will be derived from examining the cycle structure induced by each symbol of the alphabet of the automaton, as well as the ways in which the cycles overlap.

For any $c(X) = X^n + c_{n-1}X^{n-1} + \cdots + c_1X + c_0$ there is a normal form matrix $M$ of the form given in Fig.~\ref{fig:companion-matrix}, such that $c(X) = \text{det}(XI-M)$, where $I$ is the identity matrix. We say that $M$ is in canonical form.

\begin{figure}
\begin{displaymath}
M=\left[\begin{array}{ccccc} 
0 & \ 1 & 0 & \cdots & 0 \\ 
0 & \ 0 & 1 & \cdots & 0 \\ 
\vdots & \ \vdots & & \vdots & \vdots \\
0 & \ 0 &  & \cdots  & 1 \\
c_0 & \ c_1 & \cdots & c_{n-2} & c_{n-1} 
\end{array}\right]\
\end{displaymath}
\caption{Companion matrix for $c(X) = X^n + c_{n-1}X^{n-1} + ... + c_1X + c_0$}
\label{fig:companion-matrix}
\end{figure}
In the next lemma, it will be convenient to represent XDFA states $d_s \subseteq Q$ as $s = \langle s_{n-1}, s_{n-2}, ... , s_1, s_0 \rangle$, where $s_i = 1$ if $q_i \in d_s$ and $0$ otherwise. The lemma is adapted from \cite{Stone1973} on the basis of the equivalence between unary XNFA and LFSRs.
\begin{lemma}
\label{lemma:gf2n-state-mapping}
Let $M_\sigma$ be a transition matrix representing transitions on $\sigma$ for some XNFA $N$, with characteristic polynomial $c_\sigma(X)$, and let $M_\sigma$ be in canonical form. Let $f$ be a bijection of the states of the equivalent XDFA $N_D$ onto polynomials of degree $n-1$, such that $f$ maps the state $s = \langle s_{n-1},s_{n-2},...,s_1,s_0 \rangle$ into the polynomial $f(s) = s_{n-1}X^{n-1} + s_{n-2}X^{n-2} + \cdots + s_1X + s_0$. Then $f$ maps the state $M_\sigma \cdot s$ into the polynomial $Xf(s) \text{ mod }c_\sigma(X)$.
\end{lemma}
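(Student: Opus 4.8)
The plan is to prove this by a direct computation which, using linearity, reduces to checking the claim on a spanning set of states. First I would observe that $f$ is the restriction of a $\mathrm{GF}(2)$-linear isomorphism between the space of XDFA states, identified with $\mathrm{GF}(2)^n$ via the bit-encoding $\langle s_{n-1},\dots,s_0\rangle$, and the quotient ring $R=\mathrm{GF}(2)[X]/(c_\sigma(X))$, where each residue class is named by its representative of degree less than $n$. Both of the maps to be compared — the state map $s\mapsto M_\sigma\cdot s$ and the ``multiply by $X$'' map $p(X)\mapsto X\,p(X)\bmod c_\sigma(X)$ on $R$ — are $\mathrm{GF}(2)$-linear, so it suffices to verify $f(M_\sigma\cdot s^{(i)})=X\,f(s^{(i)})\bmod c_\sigma(X)$ for the basis states $s^{(i)}$ (the singleton $\{q_i\}$), which satisfy $f(s^{(i)})=X^i$, for $i=0,1,\dots,n-1$.

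Then I would read the action of $M_\sigma$ on these basis states straight off its canonical (companion) form in Fig.~\ref{fig:companion-matrix}. For $0\le i\le n-2$ the shape of the matrix gives $M_\sigma\cdot s^{(i)}=s^{(i+1)}$, so $f(M_\sigma\cdot s^{(i)})=X^{i+1}$; and since $X\,f(s^{(i)})=X^{i+1}$ already has degree below $n$, no reduction takes place and the two sides coincide. For $i=n-1$, the last row of the companion matrix gives $M_\sigma\cdot s^{(n-1)}$ as the state whose encoding is the polynomial $c_{n-1}X^{n-1}+\dots+c_1X+c_0$. On the polynomial side $X\,f(s^{(n-1)})=X^n$, and in $R$ the relation $c_\sigma(X)=0$ together with $-1=1$ in $\mathrm{GF}(2)$ gives $X^n\equiv c_{n-1}X^{n-1}+\dots+c_1X+c_0$, so again the two sides agree. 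Linearity now extends the identity to every state $s$, proving the lemma.

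One could instead verify the claim for a general $s$ in a single pass: expanding $M_\sigma\cdot s$ via the companion form shows its encoding is $s_{n-1}c_0+(s_0+s_{n-1}c_1)X+\dots+(s_{n-2}+s_{n-1}c_{n-1})X^{n-1}$, which is precisely $X\,f(s)$ after the substitution $X^n\equiv c_{n-1}X^{n-1}+\dots+c_0$; but the basis computation keeps the indices under better control. In any case I do not anticipate a real obstacle here — this is the standard fact that the companion matrix of $c_\sigma(X)$ realises multiplication by $X$ on $\mathrm{GF}(2)[X]/(c_\sigma(X))$ — the only point needing care is pinning down the conventions consistently (the bit order inside $\langle s_{n-1},\dots,s_0\rangle$, whether states act on the left or the right of $M_\sigma$, and the $\mathrm{GF}(2)$ simplification $-1=1$ used to read $X^n\equiv c_{n-1}X^{n-1}+\dots+c_0$ off the characteristic polynomial), after which both routes are mechanical.
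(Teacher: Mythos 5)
Your proof is correct. The paper itself gives no proof of this lemma --- it is stated as an adaptation of a result from \cite{Stone1973} on the equivalence between unary XNFA and LFSRs --- so there is nothing to compare against; your argument supplies the missing justification. The reasoning is sound: the subset construction under symmetric difference is exactly the $\mathrm{GF}(2)$-linear map given by $M_\sigma$, the map $p(X)\mapsto Xp(X) \bmod c_\sigma(X)$ is likewise linear, and $f$ is a linear isomorphism, so checking the identity on the singleton states $\{q_i\}$ suffices. Your reading of the companion matrix in Fig.~\ref{fig:companion-matrix} (row $i$ sends $q_i$ to $q_{i+1}$ for $i\le n-2$, and row $n-1$ sends $q_{n-1}$ to the set encoded by $c_{n-1}X^{n-1}+\cdots+c_1X+c_0$, which is $X^n \bmod c_\sigma(X)$ over $\mathrm{GF}(2)$) matches the paper's conventions, as one can confirm against the transitions tabulated in Example~\ref{ex:binary-xnfa-gf2n}. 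No gaps.
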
 
Lemma~\ref{lemma:gf2n-state-mapping} provides a mapping between polynomials over $GF(2)$ and the states of XDFA. The XDFA state arrived at after a transition from state $s$ on $\sigma$ corresponds to the polynomial which results from multiplying $f(s)$ by $X$ in the polynomial algebra of $GF(2)[X]$ modulo $c(X)$.
\begin{example}
\label{ex:binary-xnfa-gf2n}
Let $N$ be a binary XNFA (shown in Figure~\ref{ex:binary-xnfa-gf2n}), where $M_a$ is the normal form matrix of $c_a(X) = X^4 + X^2 + X + 1$ and $M_b$ is the normal form matrix of $c_b(X) = X^4 + X^3 + X + 1$. $M_a$ and $M_b$ are given in Fig.~\ref{fig:binary-ex-m-a} and \ref{fig:binary-ex-m-b}. The resulting XDFA is shown in Figure~\ref{fig:binary-xdfa-gf2n}, while some examples comparing state transitions and polynomial multiplication are shown in Table~\ref{tab:binary-xnfa-gf2n}. Note that, for now, the focus is on the cyclic behaviour of the equivalent XDFA, and so we do not refer to any final states.
\begin{figure}
\begin{adjustbox}{valign=t}
\begin{minipage}[t]{0.45\linewidth}
\centering
\begin{displaymath}
M_a=\left[\begin{array}{cccc} 
0 & 1 & 0 & 0 \\ 
0 & 0 & 1 & 0 \\ 
0 & 0 & 0 & 1 \\
1 & 1 & 1 & 0
\end{array}\right]\
\end{displaymath}
\caption{Example~\ref{ex:binary-change-basis}, transition matrix for $a$}
\label{fig:binary-ex-m-a}
\end{minipage}
\end{adjustbox}
\hspace{0.5cm}
\begin{adjustbox}{valign=t}
\begin{minipage}[t]{0.45\linewidth}
\centering
\begin{displaymath}
M_b=\left[\begin{array}{cccc} 
0 & 1 & 0 & 0 \\ 
0 & 0 & 1 & 0 \\ 
0 & 0 & 0 & 1 \\
1 & 1 & 0 & 1
\end{array}\right]\
\end{displaymath}
\caption{Example~\ref{ex:binary-change-basis}, transition matrix for $b$}
\label{fig:binary-ex-m-b}
\end{minipage}
\end{adjustbox}
\end{figure}
\begin{figure}
\begin{adjustbox}{valign=t}
\begin{minipage}[t]{0.45\linewidth}
\centering
\includegraphics[width=0.85\textwidth]{example1xnfa.tikz}
\caption{Example~\ref{ex:binary-xnfa-gf2n}, $N$}
\label{fig:binary-xnfa-gf2n}
\end{minipage}
\end{adjustbox}
\hspace{0.5cm}
\begin{adjustbox}{valign=t}
\begin{minipage}[t]{0.45\linewidth}
\centering
\includegraphics[width=0.75\textwidth]{example1xdfa.tikz}
\caption{Example~\ref{ex:binary-xnfa-gf2n}, $N_D$}
\label{fig:binary-xdfa-gf2n}
\end{minipage}
\end{adjustbox}
\end{figure}
\begin{table}
\begin{adjustbox}{valign=t}
\begin{minipage}[t]{\linewidth}
\centering
\caption{Transitions on $\delta$ correspond to multiplication by $X$}
\setlength{\tabcolsep}{5pt}
\renewcommand{\arraystretch}{1.1}
\begin{tabular}{l|ll}
\label{tab:binary-xnfa-gf2n}
$\delta_D(s,\sigma)$ & $Xf(s)\text{ mod }c_\sigma(X)$ \\
\hline
$\delta_D(\{q_0\},a) = \{q_1\}$                  & $X(1)$ & $= X$ \\
$\delta_D(\{q_3\},a) = \{q_0,q_1,q_2\}$          & $X(X^3)$ & $= X^4\text{ mod }c_a(X)$ \\
                                         && $= X^2 + X + 1$ \\
$\delta_D(\{q_0,q_2,q_3\},a) = \{q_0,q_2,q_3\}$  & $X(X^3 + X^2 + 1)$ & $= X^4 + X^3 + X\text{ mod }c_a(X)$ \\
                                         && $= X^3 + X^2 + 1$ \\
\hline
$\delta_D(\{q_1\},b) = \{q_2\}$                  & $X(X)$ & $= X^2$ \\
$\delta_D(\{q_0,q_1,q_3\},b) = \{q_0,q_2,q_3\}$  & $X(X^3 + X + 1)$ & $= X^4 + X^2 + X\text{ mod }c_b(X)$ \\
                                         && $= X^3 + X^2 + 1$ \\
$\delta_D(\{q_1,q_2,q_3\},b) = \{q_0,q_1,q_2\}$  & $X(X^3 + X^2 + X)$ & $= X^4 + X^3 + X^2\text{ mod }c_b(X)$ \\
                                         && $= X^2 + X + 1$ \\
\end{tabular}
\end{minipage}
\end{adjustbox}
\end{table}
\end{example}
\section{Non-unary SV-XNFA}
\label{sec:bin-svxnfa}
The upper bound on state complexity is simply $2^n-1$, since this is the number of non-empty subsets for any set of $n$ XNFA states. We now work towards establishing a lower bound on state complexity. First, we restate the following lemma from \cite{Marais2016} for the unary case.
\begin{lemma}
\label{lemma:cycle-with-odd-sized-states}
Let $c(X) = (X+1)\phi(X)$ be a polynomial of degree $n$ with non-singular normal form matrix $M$, and let $N$ be a unary XNFA with transition matrix $M$ and $Q_0 = \{q_0\}$. Then the equivalent XDFA $N_D$ has the following properties:
\begin{enumerate}
\item $|Q_D| > n$
\item $|d|$ is odd for $d \in Q_D$
\item $[q_0],[q_1],...,[q_{n-1}] \in Q_D$
\end{enumerate}
\end{lemma}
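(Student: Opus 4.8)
The plan is to work entirely in the polynomial model furnished by Lemma~\ref{lemma:gf2n-state-mapping}. Since $M$ is in canonical form and $Q_0=\{q_0\}$, I would identify each reachable XDFA state with a polynomial of degree at most $n-1$ over $GF(2)$; under this identification the single alphabet symbol acts by $p(X)\mapsto Xp(X)\bmod c(X)$, the singleton $[q_i]$ corresponds to the monomial $X^i$, and the initial state $[q_0]$ corresponds to $1=X^0$. Hence $Q_D$ corresponds exactly to the orbit $\{\,X^k\bmod c(X):k\ge 0\,\}$, and since $M$ is non-singular we have $c_0\neq 0$, so multiplication by $X$ is invertible modulo $c(X)$ and this orbit is a single pure cycle, consistent with the standing assumption of no transient heads.

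For item~3, the observation is that for $0\le k\le n-1$ the polynomial $X^k$ already has degree $<n$, so $X^k\bmod c(X)=X^k$; these are $n$ pairwise distinct monomials, so $[q_0],[q_1],\dots,[q_{n-1}]$ are $n$ distinct states all lying in $Q_D$. For item~2, take any $d\in Q_D$ corresponding to $p(X)=X^k\bmod c(X)$ and write $X^k=t(X)c(X)+p(X)$. Evaluating at $X=1$ over $GF(2)$ and using $c(X)=(X+1)\phi(X)$, so $c(1)=0$, yields $p(1)=1$; since $p(1)$ equals the number of nonzero coefficients of $p$ reduced modulo $2$, i.e. $|d|\bmod 2$, we conclude that $|d|$ is odd.

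Item~1 is the part that needs the most care, and I expect it to be the main obstacle. From item~3 we already have $|Q_D|\ge n$, so it suffices to rule out $|Q_D|=n$, i.e. to show that the cycle does not close up after exactly the $n$ singletons. If it did, then $X^n\bmod c(X)\in\{1,X,\dots,X^{n-1}\}$, say $X^n\equiv X^j\pmod{c(X)}$ with $0\le j\le n-1$; multiplying by the inverse of $X^j$ gives $X^{n-j}\equiv 1\pmod{c(X)}$, and since $\deg c=n$ this forces $j=0$ and $c(X)=X^n+1$. The remaining step is to exclude this degenerate characteristic polynomial, using the factorisation $c=(X+1)\phi$ together with the non-singularity and non-degeneracy conditions on $c$ in force for the lemma; once $c(X)\neq X^n+1$ is established, the cycle must visit at least one further state and $|Q_D|>n$ follows. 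An alternative route for this last step is to argue about the order of $X$ in $(GF(2)[X]/c(X))^{*}$ directly: this order is divisible by the order of $X$ modulo $\phi(X)$, and under the assumed properties of $\phi$ this already exceeds $n$, again giving $|Q_D|>n$.
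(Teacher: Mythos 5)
The paper itself offers no proof of this lemma --- it is restated from an earlier paper --- so your argument has to stand on its own. Your treatment of items 2 and 3 does: passing to the polynomial model of Lemma~\ref{lemma:gf2n-state-mapping}, observing that $Q_D$ is the orbit $\{X^k \bmod c(X) : k \ge 0\}$ of $1$ under an invertible multiplication-by-$X$ map (invertible because $c_0 \neq 0$), that $X^0,\dots,X^{n-1}$ are already reduced and hence give $n$ distinct singleton states, and that evaluating $X^k = t(X)c(X) + p(X)$ at $X=1$ with $c(1) = (1+1)\phi(1) = 0$ forces $p(1)=1$, i.e.\ odd weight, is correct and clean.

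Item 1, however, is left with a gap that you yourself flag, and it cannot be closed from the hypotheses as the lemma states them. Your reduction is right: $|Q_D| = n$ would force $X^n \equiv X^j \pmod{c(X)}$, hence $j=0$ and $c(X) = X^n+1$. But $X^n+1 = (X+1)(X^{n-1}+X^{n-2}+\cdots+1)$ over GF(2), so it has the form $(X+1)\phi(X)$, its companion matrix is non-singular ($c_0=1$), and its action on $\{q_0\}$ is the pure $n$-cycle through the singletons, with no transient head. So $c(X)=X^n+1$ satisfies every stated hypothesis and yields $|Q_D|=n$ exactly: there is no ``non-degeneracy condition in force'' that excludes it, and the step you defer is not a formality but the point where the statement itself needs a stronger hypothesis. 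The missing ingredient is the assumption carried by the source and by the later application in Lemma~\ref{lemma:witness-machine}, namely that $\phi$ is primitive of degree $n-1$; with that, your alternative route goes through, since $\gcd(X+1,\phi)=1$ gives $\mathrm{ord}(X \bmod c) = \mathrm{lcm}(1,\mathrm{ord}(X\bmod\phi)) = 2^{n-1}-1$, which exceeds $n$ for $n\ge 4$ (and note the bound $|Q_D|>n$ still fails for $n=2,3$ even then). You should either import that hypothesis explicitly or note that item 1 does not follow from the lemma as literally stated.
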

$|d|$ is the number of XNFA states in the XDFA state $d \subseteq Q$, or the number of one's in the representation of $d$ as $\langle s_{n-1},s_{n-2},...,s_1,s_0 \rangle$ where $s_i = 1$ if $q_i \in d$ and $0$ otherwise.
\begin{theorem}
\label{theorem:lower-bound-2n-1}
Let $M_{\sigma_1}$, $M_{\sigma_2}$, ..., $M_{\sigma_r}$ be the normal form matrices of $r$ polynomials $c_{\sigma_1}(X) = (X+1)\phi_{\sigma_1}(X)$, $c_{\sigma_2}(X) = (X+1)\phi_{\sigma_2}(X)$, ..., $c_{\sigma_r}(X) = (X+1)\phi_{\sigma_r}(X)$, respectively, and let $M_{\sigma_1}$, $M_{\sigma_2}$, ..., $M_{\sigma_r}$ be the transition matrices of some $r$-ary XNFA $N$ with $\Sigma = \{\sigma_1,\sigma_2,...,\sigma_r\}$ and $Q_0 = \{q_0\}$. Then the number of states in the equivalent XDFA $N_D$ does not exceed $2^{n-1}$. Furthermore, any choice of $F^a$ and $F^r$ such that $F^a \cup F^r = Q$ and $F^a \cap F^r = \emptyset$ is an SV-assignment.
\end{theorem}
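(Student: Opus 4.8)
The plan is to translate the reachability question into polynomial arithmetic over $GF(2)$ by means of Lemma~\ref{lemma:gf2n-state-mapping}, and then exploit the common factor $(X+1)$ of all the $c_{\sigma_i}(X)$.

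First I would recall the set-up of Lemma~\ref{lemma:gf2n-state-mapping}: the bijection $f$ identifies each XDFA state $d$ with a polynomial $f(d)$ of degree $\leq n-1$ over $GF(2)$, and for each symbol $\sigma$ a transition on $\sigma$ corresponds to the map $p(X) \mapsto Xp(X) \bmod c_\sigma(X)$. The same $f$ serves for every symbol, since it depends only on $n$. The initial state $Q_0 = \{q_0\}$ corresponds to the constant polynomial $1$. The crucial observation is that $|d|$, the number of XNFA states in $d$, equals the number of nonzero coefficients of $f(d)$, so $|d|$ is odd if and only if $f(d)(1) = 1$ in $GF(2)$.

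Next comes the key step. Let $P$ be the set of polynomials $p$ of degree $\leq n-1$ with $p(1)=1$. I would show $P$ is invariant under every transition map: if $Xp(X) = q(X)c_\sigma(X) + R(X)$ with $\deg R < n$, then evaluating at $X=1$ and using $c_\sigma(1) = (1+1)\phi_\sigma(1) = 0$ yields $R(1) = p(1)$; hence $p \in P$ implies $R \in P$. Since $f(Q_0)$ is the constant polynomial $1 \in P$, an induction on the length of the input word shows that every reachable XDFA state $d$ satisfies $f(d) \in P$, i.e. $|d|$ is odd --- in particular the empty set (polynomial $0 \notin P$) is never reached, consistent with the matrices being non-singular. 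A dimension count then settles the first claim: evaluation $p \mapsto p(1)$ is a surjective $GF(2)$-linear functional on the $n$-dimensional space of polynomials of degree $\leq n-1$ (which has $2^n$ elements), so $|P| = 2^{n-1}$, and therefore $|Q_D| \leq 2^{n-1}$.

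For the final statement, let $F^a$ and $F^r$ with $F^a \cup F^r = Q$ and $F^a \cap F^r = \emptyset$ be given. For any reachable XDFA state $d$ we have $|d \cap F^a| + |d \cap F^r| = |d|$, which is odd by the above; hence exactly one of the two summands is odd. Under parity acceptance this is precisely the SV-condition: every XDFA state contains an odd number of states from exactly one of $F^a$ and $F^r$ (and an even number from the other), so it is explicitly an accept state or explicitly a reject state, but not both. Thus the assignment is an SV-assignment. I expect the only real obstacle to be bookkeeping: Lemma~\ref{lemma:gf2n-state-mapping} is phrased for a single matrix, so I must check that the invariant ``$f(d)(1) = 1$'' survives a transition on \emph{any} symbol --- which it does because $(X+1) \mid c_{\sigma_i}(X)$, i.e. $c_{\sigma_i}(1) = 0$, for every $i$. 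Everything else is a one-line induction and a count of the fibers of a linear functional.
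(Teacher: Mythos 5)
Your proposal is correct and takes essentially the same route as the paper: both arguments rest on the observation that, because $(X+1)$ divides every $c_{\sigma_i}(X)$, every transition preserves odd size of the XDFA state, so starting from the odd-sized $\{q_0\}$ only the $2^{n}/2 = 2^{n-1}$ odd-sized subsets are reachable, and oddness of $|d| = |d\cap F^a| + |d\cap F^r|$ forces exactly one summand to be odd. The only difference is that the paper justifies the invariance step by citing Lemma~\ref{lemma:cycle-with-odd-sized-states}, whereas you prove it directly by evaluating $Xp(X) = q(X)c_\sigma(X) + R(X)$ at $X=1$ --- a slightly more self-contained treatment of the same idea.
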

\begin{proof}
By Lemma~\ref{lemma:cycle-with-odd-sized-states}, $|d|$ is odd for $d \in Q_D$ in the unary case. That is, for any symbol with a transition matrix whose polynomial has $X+1$ as a factor, a transition from an odd-sized XDFA state is to another odd-sized XDFA state. Since $Q_0 = \{q_0\}$ and $|\{q_0\}|$ is odd, and $c_{\sigma_1}(X)$, $c_{\sigma_2}(X)$,...$c_{\sigma_r}(X)$, have $X+1$ as a factor, only odd-sized states are reachable on any transition. The number of XDFA states $d$ such that $|d|$ is odd is $2^n/2 = 2^{n-1}$, and so $N_D$ can have at most $2^{n-1}$ states.
Since every XDFA state contains an odd number of XNFA states, any choice of $F^a$ and $F^r$ such that $F^a \cup F^r = Q$ and $F^a \cap F^r = \emptyset$ is an SV-assignment.
\end{proof}
The following lemma provides further information on the cycle structure induced by polynomials with $X+1$ as a factor.
\begin{lemma}
\label{lemma:phi-1-cycle}
Let $c_\sigma(X) = (X+1)\phi(X)$. Then, in the normal form matrix $M_\sigma$ of $c_\sigma(X)$, which is the transition matrix on some symbol $\sigma$ for an XNFA, the state mapped to $\phi(X)$ as described in Lemma~\ref{lemma:gf2n-state-mapping}, i.e. $d_\phi$, is contained in a cycle of length one, when considering only transitions on $\sigma$.
\end{lemma}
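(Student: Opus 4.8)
The plan is to use the polynomial representation from Lemma~\ref{lemma:gf2n-state-mapping} and show that the state $d_\phi$, which corresponds to the polynomial $\phi(X)$, is a fixed point of the transition map on $\sigma$. Since a transition on $\sigma$ from the state corresponding to a polynomial $p(X)$ of degree at most $n-1$ goes to the state corresponding to $X p(X) \bmod c_\sigma(X)$, it suffices to show that $X\phi(X) \equiv \phi(X) \pmod{c_\sigma(X)}$, i.e. that $c_\sigma(X)$ divides $(X+1)\phi(X)$. But $c_\sigma(X) = (X+1)\phi(X)$ by hypothesis, so this divisibility is immediate. I would also note that $\phi(X)$ has degree $n-1$ (since $c_\sigma$ has degree $n$ and $X+1$ has degree $1$), so it is a legitimate polynomial in the image of the bijection $f$ and $d_\phi$ is a genuine XDFA state.

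First I would recall that $M_\sigma$ is in canonical (companion) form, so Lemma~\ref{lemma:gf2n-state-mapping} applies: for the state $s$ with $f(s) = \phi(X)$, the successor state $M_\sigma \cdot s$ maps under $f$ to $X\phi(X) \bmod c_\sigma(X)$. Next I would compute $X\phi(X) = (X+1)\phi(X) + \phi(X) = c_\sigma(X) + \phi(X)$, so that $X\phi(X) \bmod c_\sigma(X) = \phi(X)$. Hence $f(M_\sigma \cdot s) = \phi(X) = f(s)$, and since $f$ is a bijection, $M_\sigma \cdot s = s$, i.e. $\delta_D(d_\phi, \sigma) = d_\phi$. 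This means $d_\phi$ sits in a cycle of length one under transitions on $\sigma$, as claimed.

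There is essentially no obstacle here — the result is a one-line consequence of the polynomial dictionary in Lemma~\ref{lemma:gf2n-state-mapping} combined with the factorisation $c_\sigma(X) = (X+1)\phi(X)$. The only point requiring a sentence of care is confirming that $\phi(X)$ indeed has degree exactly $n-1$ so that it lies in the domain of the correspondence and $d_\phi$ is well defined as a nonzero XDFA state; this follows since $\deg c_\sigma = n$ and $\deg(X+1) = 1$. If one wanted an even more concrete argument, one could instead exhibit the bit-vector $s$ for $d_\phi$ explicitly and verify $M_\sigma s = s$ by direct matrix multiplication using the companion form, but the polynomial argument is cleaner and I would present that.
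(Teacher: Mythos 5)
Your proof is correct and follows essentially the same route as the paper: rewrite $X\phi(X) = c_\sigma(X) + \phi(X)$ over GF(2), conclude $X\phi(X) \equiv \phi(X) \pmod{c_\sigma(X)}$, and invoke Lemma~\ref{lemma:gf2n-state-mapping} to get $\delta_D(d_\phi,\sigma) = d_\phi$. Your added remark that $\deg\phi = n-1$, so that $d_\phi$ is a legitimate nonzero XDFA state, is a small point of care the paper leaves implicit.
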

\begin{proof}
Consider the following:
\begin{align*}
(X+1)\phi(X) &= c_\sigma(X) \\
X\phi(X) + \phi(X) &= c_\sigma(X) \\
X\phi(X) &= \phi(X) + c_\sigma(X)
\end{align*}
Therefore, $X\phi(X) = \phi(X)$ in the representation of $GF(2^n)$ as polynomials over GF(2) modulo $c_\sigma(X)$. By Lemma~\ref{lemma:gf2n-state-mapping}, this corresponds to $\delta_D(d_\phi,\sigma) = d_\phi$.
\end{proof}
We now present a witness language for any $n$ to show that $2^{n-1}$ is a lower bound on the state complexity of SV-XNFA with non-unary alphabets. First, we restate the following theorem from \cite{Marais2016}.
\begin{theorem}
\label{theorem:sv-xnfa-2n-1-1-complexity}
For any $n \geq 2$, there is an SV-XNFA $N$ whose equivalent $N_D$ has $2^{n-1}-1$ states.
\end{theorem}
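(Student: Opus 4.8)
The plan is to exhibit, for each $n \geq 2$, an explicit unary witness. Fix a primitive polynomial $\phi(X)$ of degree $n-1$ over $GF(2)$ (one exists for every $n-1 \geq 1$) and set $c(X) = (X+1)\phi(X)$, a polynomial of degree $n$ with $c(0) = \phi(0) = 1$. Let $M$ be the companion matrix of $c(X)$; it is non-singular (so its cycle structure has no transient heads), hence $N = (Q, \{a\}, \delta, \{q_0\}, F^a, F^r)$ with transition matrix $M_a = M$ is an admissible unary XNFA once $F^a$ and $F^r$ are chosen. This is precisely the set-up behind Lemmas~\ref{lemma:cycle-with-odd-sized-states} and~\ref{lemma:phi-1-cycle}.

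First I would count the reachable states of $N_D$. By Lemma~\ref{lemma:gf2n-state-mapping}, reading $a$ sends the XDFA state associated with a polynomial $p$ to the one associated with $Xp \bmod c(X)$, and $\{q_0\}$ is associated with the constant polynomial $1$; hence the reachable states of $N_D$ form exactly the orbit $\{\, X^k \bmod c(X) : k \geq 0 \,\}$. For $n \geq 3$ the factor $\phi$ is irreducible of degree at least $2$ and therefore coprime to $X+1$, so the Chinese Remainder Theorem gives $GF(2)[X]/c(X) \cong GF(2) \times GF(2^{n-1})$, under which multiplication by $X$ becomes multiplication by $(1,\alpha)$ for a generator $\alpha$ of $GF(2^{n-1})^{*}$ and $1$ becomes $(1,1)$. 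The orbit of $(1,1)$ is $\{(1,\alpha^k) : k \geq 0\}$, of size exactly $\operatorname{ord}(\alpha) = 2^{n-1}-1$; since its second coordinate is never $0$, this orbit avoids the fixed point $d_\phi \leftrightarrow (1,0)$ of Lemma~\ref{lemma:phi-1-cycle}, which is a separate $1$-cycle. Thus $N_D$ has $2^{n-1}-1$ reachable states. The degenerate case $n = 2$, where $2^{n-1}-1 = 1$, I would dispose of directly with a two-state XNFA whose transition matrix has $Q_0$ as a fixed vector, so that $N_D$ consists of a single state.

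Next I would verify that $N$ can be made a genuine SV-XNFA whose \emph{minimal} DFA still has $2^{n-1}-1$ states. Since $\{q_0\}$ is odd-sized and $c(X) = (X+1)\phi(X)$, Lemma~\ref{lemma:cycle-with-odd-sized-states} shows that every reachable XDFA state $d$ has $|d|$ odd; hence for \emph{any} partition $F^a \cup F^r = Q$ with $F^a \cap F^r = \emptyset$, each XDFA state contains an odd number of states from exactly one of $F^a, F^r$, so every such partition is an SV-assignment. It remains to choose the partition so that the $2^{n-1}-1$ cyclically arranged reachable states are pairwise distinguishable, i.e.\ so that the induced accept/reject labelling around the cycle is fixed by no nontrivial rotation. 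I would take $F^a = \{q_0\}$, so that the label at cycle position $k$ is the constant coefficient of $X^k \bmod c(X)$: invariance under rotation by $t$ with $0 < t < 2^{n-1}-1$ would force the nonzero constant-coefficient functional on $GF(2)[X]/c(X)$ to vanish on the ideal generated by $1 + X^t$; but that ideal coincides with the ideal generated by $X+1$, since $\phi \nmid 1 + X^t$ whenever $t$ is not a multiple of $\operatorname{ord}(\alpha) = 2^{n-1}-1$, and the constant-coefficient functional does not vanish there --- a contradiction. Hence with $F^a = \{q_0\}$ all $2^{n-1}-1$ reachable states are distinct and pairwise distinguishable, so $N_D$ is the minimal DFA.

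I expect the main obstacle to be this last distinguishability step. The orbit count is a routine application of the Chinese Remainder Theorem, and a valid SV-assignment comes for free from odd-sizedness, but producing a concrete $F^a$ for which no proper rotation fixes the cyclic labelling --- and arguing this uniformly in $n$, given that $2^{n-1}-1$ is not always prime --- is where the real work lies. A secondary point is that the Chinese Remainder Theorem argument degenerates at $n = 2$, where $X+1$ and $\phi = X+1$ fail to be coprime, which is why that case is treated by a small direct construction.
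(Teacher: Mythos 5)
Your proof is correct. Note that the paper itself gives no proof of this theorem --- it is restated from \cite{Marais2016} --- so there is nothing to compare line by line; what you have done is supply a self-contained argument for a fact the paper imports, and your witness ($c(X)=(X+1)\phi(X)$ with $\phi$ primitive of degree $n-1$, companion matrix, $Q_0=\{q_0\}$) is exactly the unary ingredient the paper later builds on in Lemma~\ref{lemma:witness-machine}. The Chinese Remainder decomposition $GF(2)[X]/c(X)\cong GF(2)\times GF(2^{n-1})$ cleanly yields the orbit size $\operatorname{ord}(\alpha)=2^{n-1}-1$ and simultaneously explains why $d_\phi\leftrightarrow(1,0)$ sits in its own $1$-cycle, which is a tidier account than the paper's separate Lemmas~\ref{lemma:cycle-with-odd-sized-states} and~\ref{lemma:phi-1-cycle}. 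Two remarks. First, you prove more than the statement requires: $N_D$ here is the subset-construction automaton, so counting reachable odd-sized subsets already gives $|Q_D|=2^{n-1}-1$; your distinguishability argument (that the constant-coefficient functional cannot vanish on the ideal generated by $1+X^t$, which equals the ideal generated by $X+1$ for $0<t<2^{n-1}-1$) establishes minimality of $N_D$ as well, which is a genuinely stronger and worthwhile addition, and it is sound. Second, your observation that the construction degenerates at $n=2$ (where $\phi=X+1$ is not coprime to $X+1$ and the orbit of $1$ modulo $X^2+1$ has length $2$, not $1$) is a real issue that the paper glosses over, and your separate one-cycle witness for that case is the right fix.
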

\begin{lemma}
\label{lemma:witness-machine}
Let $\phi(X) = X^{n-1} + \phi_{n-2}X^{n-2} + \cdots + \phi_1X + \phi_0$ be any primitive polynomial of degree $n-1$. Let $N$ be a binary XNFA, and let the transition matrix on $a$ be the normal form matrix of $c_a(X) = (X+1)\phi(X)$ and the transition matrix on $b$ be the normal form matrix of $c_b(X) = X^n + \phi(X)$. Then the equivalent XDFA of the XNFA with $Q_0 = \{q_0\}$ contains exactly $2^{n-1}$ odd-sized states.
\end{lemma}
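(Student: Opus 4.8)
The plan is to transfer the problem into the polynomial algebras $GF(2)[X]/(c_a(X))$ and $GF(2)[X]/(c_b(X))$ via Lemma~\ref{lemma:gf2n-state-mapping} and count the reachable XDFA states directly. First I would observe that, since $\phi$ is primitive it is irreducible, so for $n\geq 3$ (when $\deg\phi\geq 2$) it has no root in $GF(2)$; in particular $\phi_0=\phi(0)=1$ and $\phi(1)=1$, whence $c_b(1)=1+\phi(1)=0$, so $(X+1)\mid c_b(X)$ — and of course $(X+1)\mid c_a(X)$. Evaluating the identity $f(\delta_D(d,\sigma))=Xf(d)\bmod c_\sigma(X)$ at $X=1$ and using $c_\sigma(1)=0$ then shows that a transition on $a$ or on $b$ preserves the parity of $|d|$, which equals $f(d)(1)$. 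Since $Q_0=\{q_0\}$ and $|\{q_0\}|=1$ is odd, every reachable XDFA state is odd-sized; by Theorem~\ref{theorem:lower-bound-2n-1} there are at most $2^{n-1}$ of these, and since an $n$-element set has exactly $2^{n-1}$ odd-sized subsets, it remains to prove that \emph{every} odd-sized state is reachable from $\{q_0\}$ (the case $n=2$ being treated directly below).

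Next I would pin down the cycle on $a$ through $\{q_0\}$. For $n\geq 3$, $\phi$ is irreducible of degree $\geq 2$, so $\gcd(X+1,\phi(X))=1$ and the Chinese Remainder Theorem (CRT) gives $GF(2)[X]/(c_a(X))\cong GF(2)[X]/(X+1)\times GF(2)[X]/(\phi(X))$, on which multiplication by $X$ acts as the identity on the first factor and as multiplication by a generator of $GF(2^{n-1})^{*}$ on the second (because $\phi$ is primitive). Hence $X$ has multiplicative order $\mathrm{lcm}(1,2^{n-1}-1)=2^{n-1}-1$ modulo $c_a(X)$, so the orbit of the unit $1\leftrightarrow\{q_0\}$ under multiplication by $X$ — that is, the $a$-cycle through $\{q_0\}$ — has exactly $2^{n-1}-1$ states, all odd-sized. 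By Lemma~\ref{lemma:phi-1-cycle}, the state $d_\phi\leftrightarrow\phi(X)$ is fixed by $a$; it is odd-sized since $\phi(1)=1$, and it is not on that cycle, which consists of units of $GF(2)[X]/(c_a(X))$ whereas $\phi(X)$ is a zero divisor there (as $\phi(X)\cdot(X+1)\equiv 0$). Therefore the $2^{n-1}$ odd-sized states are precisely the states of the $a$-cycle through $\{q_0\}$ together with $d_\phi$. (The case $n=2$ is immediate: there $\phi(X)=X+1$, $c_a(X)=(X+1)^2$, and the $a$-cycle $\{1,X\}=\{\{q_0\},\{q_1\}\}$ already is the set of all odd-sized states.)

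It then suffices to show that $d_\phi$ is reachable. Since $M_b$ is non-singular, $d_\phi$ has a unique $b$-predecessor $d^{*}$, namely the state with $f(d^{*})=X^{-1}\phi(X)\bmod c_b(X)$ ($X$ is invertible modulo $c_b(X)$ because $c_b(0)=\phi_0=1$). By parity preservation on $b$, $d^{*}$ is odd-sized, and $d^{*}\neq d_\phi$: equality would force $X\phi(X)\equiv\phi(X)\pmod{c_b(X)}$, i.e. $(X+1)\phi(X)\equiv 0\pmod{c_b(X)}$, and since $(X+1)\phi(X)$ and $c_b(X)$ are both monic of degree $n$ this would give $c_b(X)=(X+1)\phi(X)$, i.e. $X^n=X\phi(X)$, i.e. $\phi(X)=X^{n-1}$, contradicting $\phi_0=1$. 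Hence $d^{*}$ is an odd-sized state other than $d_\phi$, so by the previous paragraph it lies on the $a$-cycle through $\{q_0\}$ and is therefore reachable, whence $d_\phi=\delta_D(d^{*},b)$ is reachable too. Consequently all $2^{n-1}$ odd-sized states occur in the XDFA, and by parity only these do, so the XDFA has exactly $2^{n-1}$ odd-sized states.

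The main obstacle is the interplay of the two moduli: the argument works precisely because $c_b(X)=X^n+\phi(X)$ is \emph{not} equal to $c_a(X)=(X+1)\phi(X)$ — equivalently, because $\phi$ is not the monomial $X^{n-1}$ — which is exactly what allows the single odd-sized state $d_\phi$ missing from the $a$-cycle to be captured by one $b$-step. Establishing the order of $X$ modulo $c_a(X)$, and hence the exact size $2^{n-1}-1$ of the $a$-cycle, through the CRT decomposition is the technical heart, and the degenerate case $n=2$ (where $d_\phi$ is even-sized) must be dispatched on its own.
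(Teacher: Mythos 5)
Your proof is correct and follows the same skeleton as the paper's: the $a$-cycle through $\{q_0\}$ has length $2^{n-1}-1$ and consists of odd-sized states, $d_\phi$ is the one remaining odd-sized state (fixed under $a$ by Lemma~\ref{lemma:phi-1-cycle}), and a single $b$-transition captures it. The differences are worth noting. Where the paper simply cites Theorem~\ref{theorem:sv-xnfa-2n-1-1-complexity} for the cycle length, you derive it from the CRT decomposition $GF(2)[X]/(c_a)\cong GF(2)[X]/(X+1)\times GF(2)[X]/(\phi)$ and the order of $X$; this is more self-contained and also gives you a cleaner reason why $d_\phi$ is off the cycle (zero divisor versus unit) than the paper's fixed-point argument. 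For reachability of $d_\phi$ the paper reads the explicit transition $\delta_D(\{q_{n-1}\},b)=d_\phi$ off the last row of $M_b$, which is more concrete than your predecessor argument via non-singularity of $M_b$, but both are valid. Most importantly, your separate treatment of $n=2$ fixes a genuine blemish in the paper: for $n=2$ one has $\phi(X)=X+1$, so $\phi(1)=0$, $d_\phi=\{q_0,q_1\}$ is even-sized, and $c_b(X)=X^2+X+1$ does \emph{not} have $X+1$ as a factor, so the paper's claims that $|d_\phi|$ is odd and that $b$-transitions preserve parity both fail there (indeed the XDFA then also reaches the even-sized state $\{q_0,q_1\}$); the lemma's literal conclusion still holds because the $a$-cycle alone already covers both odd-sized states, exactly as you argue.
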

\begin{proof}
We write $c_a(X)$ and $c_b(X)$ in the following way:
\begin{align*}
c_a(X) &= X^n + c_{n-1}X^{n-1} + \cdots + c_{1}X + c_{0} \\
c_b(X) &= X^n + \phi_{n-1}X^{n-1} + \phi_{n-2}X^{n-2} + \cdots + \phi_1X + \phi_0
\end{align*}
Since $\phi(X)$ is primitive, it has no roots in GF(2), including 1, so it must have an odd number of non-zero terms. Therefore, by Lemma~\ref{lemma:gf2n-state-mapping}, $|d_\phi|$ is odd. Furthermore, $c_b(X)$ has an even number of non-zero terms, and so has 1 as a root. Consequently, $c_b(X)$ has $X+1$ as a factor.

The transition matrices $M_a$ and $M_b$ are given in Fig.~\ref{fig:binary-theorem-m-a} and \ref{fig:binary-theorem-m-b}. Note that they are both non-singular. Let $Q_0 = \{q_0\}$. Then by Theorem~\ref{theorem:sv-xnfa-2n-1-1-complexity}, the cycle structure on $a$ is equivalent to an XDFA cycle with $2^{n-1}-1$ states, all of which, by Lemma~\ref{lemma:cycle-with-odd-sized-states}, have odd size. Also, by Lemma~\ref{lemma:phi-1-cycle}, $d_\phi$ is not contained in this cycle. This means that on $a$, every odd-sized state in the XDFA is reached except for $d_\phi$.
\begin{figure}
\begin{adjustbox}{valign=t}
\begin{minipage}[t]{0.45\linewidth}
\centering
\begin{displaymath}
M_a=\left[\begin{array}{ccccc} 
0 & \ 1 & 0 & \cdots & 0 \\ 
0 & \ 0 & 1 & \cdots & 0 \\ 
\vdots & \ \vdots & & \vdots & \vdots \\
0 & \ 0 &  & \cdots  & 1 \\
c_0 & \ c_1 & \cdots & c_{n-2} & c_{n-1} 
\end{array}\right]\
\end{displaymath}
\caption{Lemma~\ref{lemma:witness-machine}, transition matrix for $a$}
\label{fig:binary-theorem-m-a}
\end{minipage}
\end{adjustbox}
\hspace{0.5cm}
\begin{adjustbox}{valign=t}
\begin{minipage}[t]{0.45\linewidth}
\centering
\begin{displaymath}
M_b=\left[\begin{array}{ccccc} 
0 & \ 1 & 0 & \cdots & 0 \\ 
0 & \ 0 & 1 & \cdots & 0 \\ 
\vdots & \ \vdots & & \vdots & \vdots \\
0 & \ 0 &  & \cdots  & 1 \\
\phi_0 & \ \phi_1 & \cdots & \phi_{n-2} & \phi_{n-1} 
\end{array}\right]\
\end{displaymath}
\caption{Lemma~\ref{lemma:witness-machine}, transition matrix for $b$}
\label{fig:binary-theorem-m-b}
\end{minipage}
\end{adjustbox}
\end{figure}
Now, from $M_b$ it follows directly that $\delta_D(\{q_{n-1}\},b) = d_\phi$. Furthermore, since $X+1$ is a factor of $c_b$, every transition from an odd-sized state on $b$ is to an odd-sized state. Consequently, the binary XNFA $N$ is equivalent to an XDFA that reaches all $2^{n-1}$ odd-sized states and none other.
\end{proof}
\begin{theorem}
\label{theorem:witness-binary}
For any $n \geq 2$, there is a language $\mathcal{L}_n$ so that some $n$-state binary SV-XNFA accepts $\mathcal{L}_n$ and the minimal SV-XDFA that accepts $\mathcal{L}_n$ has $2^{n-1}$ states.
\end{theorem}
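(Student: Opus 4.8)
The plan is to take the binary XNFA $N$ furnished by Lemma~\ref{lemma:witness-machine} (so $M_a$ is the normal form matrix of $c_a(X)=(X+1)\phi(X)$, $M_b$ that of $c_b(X)=X^n+\phi(X)$ for a primitive $\phi$ of degree $n-1$, and $Q_0=\{q_0\}$) and to fix the SV-assignment $F^a=\{q_0\}$, $F^r=Q\setminus\{q_0\}$. By Theorem~\ref{theorem:lower-bound-2n-1} this is a (non-trivial) SV-assignment, so $N$ is an $n$-state binary SV-XNFA; put $\mathcal{L}_n=\mathcal{L}(N)$. Lemma~\ref{lemma:witness-machine} says $N_D$ reaches exactly the $2^{n-1}$ odd-sized subsets of $Q$, all of them, and nothing else, so the minimal SV-XDFA for $\mathcal{L}_n$ has at most $2^{n-1}$ states. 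The whole task is then to show these $2^{n-1}$ reachable states are pairwise distinguishable. (The case $n=2$ must be done separately: there $\phi=X+1$, which unlike the higher-degree primitive polynomials does vanish at $1$, and the witness machine produces an even-sized reachable state; instead take the $2$-state SV-XNFA with $M_a=M_b$ the permutation swapping $q_0$ and $q_1$, $F^a=\{q_0\}$, $F^r=\{q_1\}$, whose language is the set of even-length words over $\{a,b\}$, whose minimal DFA has $2=2^{1}$ states.)

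First I would turn distinguishability into linear algebra over GF(2). By parity acceptance, a word $w$ is accepted from an XDFA state $d$ exactly when $v(d)\,M_w\,v(F^a)^T=1$; hence two states $d_1\neq d_2$ are Myhill--Nerode equivalent iff $u\,M_w\,v(F^a)^T=0$ for every $w\in\{a,b\}^*$, where $u=v(d_1)\oplus v(d_2)\neq 0$. Since $M_{a^k}=M_a^k$, it is enough to prove that $\{M_a^k\,v(F^a)^T : k\geq 0\}$ spans $\mathbb{Z}_2^{\,n}$: no nonzero $u$ can then annihilate all of them, so some power $a^k$ separates $d_1$ from $d_2$.

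Next I would verify that $v(F^a)^T=v(\{q_0\})^T=e_0$ is a cyclic vector for $M_a$. For $n\geq 3$, $\phi$ is irreducible, so $\phi(0)=1$, the constant term $c_0$ of $c_a$ equals $1$, and $M_a$ is non-singular. Reading off the companion form of $M_a$, an easy induction on $k$ shows that $M_a^k e_0$ for $1\leq k\leq n-1$ is $e_{n-k}$ plus a combination of $e_{n-k+1},\dots,e_{n-1}$; with $M_a^0 e_0=e_0$ this gives $n$ vectors whose lowest-index entries are the distinct indices $0,n-1,n-2,\dots,1$, hence a basis of $\mathbb{Z}_2^{\,n}$. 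So the spanning condition holds, all $2^{n-1}$ reachable XDFA states are separated by suitable words over the single letter $a$, and the minimal SV-XDFA accepting $\mathcal{L}_n$ has exactly $2^{n-1}$ states.

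The main obstacle is exactly this cyclic-vector step: one has to pin down a final-state vector that both yields a legal SV-assignment (for which Theorem~\ref{theorem:lower-bound-2n-1} is the right tool) and generates the whole state space under the transition matrix, and then carry out the companion-matrix bookkeeping carefully with the row-vector/transpose convention of $\Delta(w)=v(Q_0)M_w v(F)^T$. A variant that sidesteps the matrix computation uses Lemma~\ref{lemma:gf2n-state-mapping}: there, $a^k$ is accepted from $d$ iff the constant term of $X^k f(d)\bmod c_a(X)$ is $1$, so $d_1,d_2$ are distinguished iff the principal ideal generated by $f(d_1)+f(d_2)\neq 0$ in $GF(2)[X]/c_a(X)$ is not contained in the hyperplane of polynomials with zero constant term; since that ideal is generated by $\gcd\!\big(f(d_1)+f(d_2),\,c_a(X)\big)\in\{1,\ X+1,\ \phi(X)\}$, each of which has nonzero constant term, the containment indeed fails. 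Either way, the only genuine care points are the $n=2$ degeneracy and the transpose conventions.
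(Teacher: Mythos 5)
Your proof is correct, but it reaches the minimality half of the statement by a genuinely different route than the paper. The paper argues via two explicit witness sublanguages: $\mathcal{L}_n^1=\{a^{(2^{n-1}-1)i+j}\}$, for which it argues (somewhat informally, by inspecting $\max(d_k)$ and the companion matrix) that no two states on the $a$-cycle can be merged, and $\mathcal{L}_n^2=b^na^*$, which forces the extra fixed-point state $d_\phi$; together these account for all $2^{n-1}$ odd-sized states. You instead prove full pairwise Myhill--Nerode distinguishability by linear algebra: acceptance from $d$ is the GF(2) form $v(d)M_wv(F^a)^T$, so it suffices that $\{M_a^k v(F^a)^T\}$ spans $\mathbb{Z}_2^n$, which you verify by showing $e_0$ is a cyclic vector for the companion matrix (equivalently, via the ideal $\bigl(\gcd(f(d_1)+f(d_2),c_a)\bigr)$ in $GF(2)[X]/(c_a)$, whose possible generators $1$, $X+1$, $\phi$ all have nonzero constant term). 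This buys a tighter conclusion --- all pairs of reachable states are already separated by words in $a^*$, with $b$ needed only to reach $d_\phi$ --- and a proof that is easier to audit; the paper's version buys concrete witness words but leaves the ``no smaller $N_D'$ exists'' step loose. You also correctly flag that $n=2$ degenerates (there $\phi=X+1$ has $1$ as a root, so Lemma~\ref{lemma:witness-machine}'s parity argument fails and the construction reaches the even-sized state $\{q_0,q_1\}$), and your separate two-state witness for that case is valid; the paper silently overlooks this. Both care points you list (the transpose convention and the $n=2$ case) are handled correctly, so I see no gap.
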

\begin{proof}
Let $c_a(X) = (X+1)\phi(X)$ and $c_b = X^n + \phi(X)$, where $\phi(X)$ is a primitive polynomial and let $c_a(X)$ and $c_b(X)$ have degree $n$. We construct an SV-XNFA $N$ with $n$ states whose equivalent XDFA $N_D$ has $2^{n-1}$ states as in Lemma~\ref{lemma:witness-machine}, and let $F^a = \{q_0\}$ and $F^r = Q \setminus F^a$. Recall that for $N$, we have $\delta: Q \times \Sigma \rightarrow 2^Q$, and for $N_D$, we have $\delta_D: 2^Q \times \Sigma \rightarrow 2^Q$.

Let $\mathcal{L}_n^1 = a^{(2^{n-1}-1)i + j}$ for $i \geq 0$ and $j \in J$, where $J$ is some set of integers, represent a subset of the language accepted by $N$ that consists only of strings containing $a$. Now, from the transition matrix of $N$ it follows that $0,n \in J$, while $1,2,...,n-1 \notin J$, since $q_0 \in \delta(q_0,a^n)$, but $q_0 \notin \delta(q_0,a^m)$ for $m < n$.

If there is an $N_D'$ with fewer than $2^{n-1}-1$ states that accepts $\mathcal{L}_n^1$, then there must be some $d_j \in Q_D$ such that $\{q_0\} \subset d_j$, $q_0 \in \delta_D(d_j,a^n)$ and there is no $m < n$ so that $q_0 \in \delta_D(d_j,a^m)$. That is, if $N_D'$ exists, then on $N_D$, $\delta_D(\{q_0\},a) = \delta_D(d_j,a)$, and $\delta_D(\{q_1\},a) = \delta_D(d_{j+1},a)$ etc.

Let $d_k$ be any state in $N_D$ such that $d_k \neq \{q_0\}$. Let $max(d_k)$ be the largest subscript of any SV-XNFA state in $d_k$. Then $max(d_k) > 0$. Let $m = n - max(d_k)$, so $m < n$. Then, from the transition matrix of $N$, it follows that $q_0 \in \delta_D(d_k,a^m)$. That is, for any $d_k$ there is an $m < n$ so that $q_0 \in \delta_D(d_k,a^m)$. Therefore, there is no $N_D'$ with fewer than $2^{n-1}-1$ states that accepts $\mathcal{L}_n^1$.

Now, let $\mathcal{L}_n^2 = b^{n}a^*$, which is also a subset of the language accepted by $N$. In order to accept this language, after reading $b^n$, a state must have been reached whereafter every transition on $a$ must result in an accept state, i.e. an XDFA state containing $q_0$. But there is only one such state, and that is $d_\phi$, since $\delta_D(d_\phi,a) = d_\phi$, which is excluded from the cycle needed to accept $\mathcal{L}_n^1$. Therefore, all $2^{n-1}$ odd-sized states are necessary to accept $\mathcal{L}^1 \cup \mathcal{L}^2$. Let $\mathcal{L}_n$ be the language accepted by $N$, then since $\mathcal{L}_n^1 \cup \mathcal{L}_n^2 \subset \mathcal{L}_n$, at least $2^{n-1}$ states are necessary to accept $\mathcal{L}_n$.
\end{proof}
We illustrate Theorem~\ref{theorem:witness-binary} for $n=4$.

\begin{example}
\label{ex:lower-bound}
Let $\phi(X) = X^3 + X + 1$, which is a primitive polynomial. Now, let $N$ be an XNFA with transition matrices $M_a$ and $M_b$. $M_a$ is the normal form matrix of $c_a(X) = (X+1)\phi(X) = X^4 + X^3 + X^2 + 1$ and $M_b$ the normal form matrix of $c_b(X) = X^4 + \phi(X) = X^4 + X^3 + X + 1$. Let $Q_0 =\{q_0\}$ and let $F^a = \{q_0\}$ and $F^r = \{q_1,q_2,q_3\}$. $M_a$ and $M_b$ are shown in Figures~\ref{fig:lower-bound-example-m-a} and \ref{fig:lower-bound-example-m-b}, while $N$ and its equivalent XDFA $N_D$ are shown in Figures~\ref{fig:lower-bound-xnfa} and \ref{fig:lower-bound-xdfa}. We have $\mathcal{L}^1 = a^{7i + j}$ for $i \geq 0$ and $j \in \{0,4,5\}$ and $\mathcal{L}^2 = bbbba^*$.
\begin{figure}
\begin{adjustbox}{valign=t}
\begin{minipage}[t]{0.45\linewidth}
\centering
\begin{displaymath}
M_a=\left[\begin{array}{ccccc} 
0 & 1 & 0 & 0 \\ 
0 & 0 & 1 & 0 \\ 
0 & 0 & 0 & 1 \\
1 & 0 & 1 & 1
\end{array}\right]\
\end{displaymath}
\caption{Example~\ref{ex:lower-bound}, transition matrix for $a$}
\label{fig:lower-bound-example-m-a}
\end{minipage}
\end{adjustbox}
\hspace{0.5cm}
\begin{adjustbox}{valign=t}
\begin{minipage}[t]{0.45\linewidth}
\centering
\begin{displaymath}
M_b=\left[\begin{array}{ccccc} 
0 & 1 & 0 & 0 \\ 
0 & 0 & 1 & 0 \\ 
0 & 0 & 0 & 1 \\
1 & 1 & 0 & 1
\end{array}\right]\
\end{displaymath}
\caption{Example~\ref{ex:lower-bound}, transition matrix for $b$}
\label{fig:lower-bound-example-m-b}
\end{minipage}
\end{adjustbox}
\end{figure}
%

\begin{figure}
\begin{adjustbox}{valign=t}
\begin{minipage}[t]{0.45\linewidth}
\centering
\includegraphics[width=0.85\textwidth]{example2xnfa.tikz}
\caption{Example~\ref{ex:lower-bound}, $N$}
\label{fig:lower-bound-xnfa}
\end{minipage}
\end{adjustbox}
\hspace{0.5cm}
\begin{adjustbox}{valign=t}
\begin{minipage}[t]{0.45\linewidth}
\centering
\includegraphics[width=0.75\textwidth]{example2xdfa.tikz}
\caption{Example~\ref{ex:lower-bound}, $N_D$}
\label{fig:lower-bound-xdfa}
\end{minipage}
\end{adjustbox}
\end{figure}
\end{example}
The following is a simple corollary of Theorem~\ref{theorem:witness-binary}.
\begin{corollary}
\label{theorem:witness-n-ary}
For any $m,n \geq 2$, there is a language $\mathcal{L}'_n$ so that some $n$-state $m$-ary SV-XNFA accepts $\mathcal{L}'_n$ and the minimal SV-XDFA that accepts $\mathcal{L}'_n$ has $2^{n-1}$ states.
\end{corollary}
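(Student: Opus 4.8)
The plan is to reduce directly to Theorem~\ref{theorem:witness-binary}. I would start from the binary witness SV-XNFA $N = (Q,\{a,b\},\delta,\{q_0\},F^a,F^r)$ built in the proof of Theorem~\ref{theorem:witness-binary}, with $F^a = \{q_0\}$, $F^r = Q\setminus\{q_0\}$, transition matrix $M_a$ the normal form of $(X+1)\phi(X)$ and $M_b$ the normal form of $X^n+\phi(X)$ for a primitive $\phi$ of degree $n-1$; recall both are non-singular and both have $X+1$ as a factor of their characteristic polynomials. I would then define an $m$-ary SV-XNFA $N'$ over $\Sigma' = \{\sigma_1,\dots,\sigma_m\}$ by setting $M_{\sigma_1} = M_a$, $M_{\sigma_2} = M_b$, and $M_{\sigma_i} = M_a$ for $3 \le i \le m$, while keeping $Q$, $Q_0 = \{q_0\}$, $F^a$ and $F^r$ unchanged, and take $\mathcal{L}'_n = \mathcal{L}(N')$.

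The first step is to show that $N'$ is a legitimate SV-XNFA whose equivalent XDFA has exactly $2^{n-1}$ states. Every transition matrix of $N'$ has $X+1$ as a factor of its characteristic polynomial, so by the argument in the proof of Theorem~\ref{theorem:lower-bound-2n-1} only odd-sized XDFA states are reachable, which bounds $|Q'_D|$ by $2^{n-1}$. Conversely, since the extra symbols $\sigma_3,\dots,\sigma_m$ merely duplicate the transitions of $a$, the set of XDFA states reachable in $N'_D$ coincides with the set reachable in $N_D$, which by Lemma~\ref{lemma:witness-machine} is exactly the $2^{n-1}$ odd-sized states. Because every XDFA state is odd-sized and $F^a \cup F^r = Q$ with $F^a \cap F^r = \emptyset$, the SV-assignment is valid, so $N'_D$ is a genuine SV-XDFA with $2^{n-1}$ states.

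Next I would transfer the lower bound. The sublanguages $\mathcal{L}_n^1 = a^{(2^{n-1}-1)i+j}$ (for $i \ge 0$ and $j \in J$ as in that proof) and $\mathcal{L}_n^2 = b^n a^*$ use only the symbols $a$ and $b$, and $N'$ behaves on words over $\{a,b\}$ exactly as $N$ does, so $\mathcal{L}_n^1 \cup \mathcal{L}_n^2 \subseteq \mathcal{L}'_n$. The counting argument of Theorem~\ref{theorem:witness-binary} — that realising the $a$-cycle needed for $\mathcal{L}_n^1$ already forces $2^{n-1}-1$ states, and that the fixed point $d_\phi$ needed for $\mathcal{L}_n^2$ is an additional, distinct state — applies verbatim to any SV-XDFA accepting a superset of $\mathcal{L}_n^1 \cup \mathcal{L}_n^2$. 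Hence every SV-XDFA accepting $\mathcal{L}'_n$ has at least $2^{n-1}$ states, and together with the exact count for $N'_D$ this gives that the minimal SV-XDFA for $\mathcal{L}'_n$ has exactly $2^{n-1}$ states.

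There is essentially no real obstacle here; the one point requiring care is verifying that enlarging the alphabet by symbols that replicate $M_a$ neither enlarges the reachable XDFA state set (it cannot, since that set is already closed under $M_a$) nor invalidates the SV-condition (it cannot, since oddness of all state sizes is preserved), so the upper and lower bounds still meet at $2^{n-1}$.
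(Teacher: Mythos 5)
Your proposal is correct and is exactly the argument the paper intends: the paper gives no explicit proof, calling the statement a ``simple corollary'' of Theorem~\ref{theorem:witness-binary}, and the natural reading is precisely your construction of padding the alphabet with copies of an existing transition matrix so that neither the reachable (odd-sized) XDFA state set nor the lower-bound sublanguages $\mathcal{L}_n^1 \cup \mathcal{L}_n^2$ are affected. Your explicit checks that the reachable set is already closed under $M_a$ and that the SV-assignment remains valid make the omitted details precise.
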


We now show that any given SV-XNFA can be used to obtain another one via a so-called change of basis.
\begin{theorem}
Given any SV-XNFA $N = (Q, \Sigma, \delta, Q_0, F^a, F^r)$ with $n$ states and transition matrices $M_{\sigma_1}$, $M_{\sigma_2}$, ..., $M_{\sigma_r}$, and any non-singular $n \times n$ matrix $A$, we encode $Q_0$ as a vector $v(Q_0)$ of length $n$ over GF(2) and $F^a$ and $F^r$ as vectors $v(F^a)$ and $v(F^r)$ respectively. Then there is an SV-XNFA $N' = (Q, \Sigma, \delta', Q_0', F'^a, F'^r)$ where $M'_{\sigma_i} = A^{-1}M_{\sigma_i}A$ for $0 \leq i \leq r$, $v(Q'_0) = v(Q_0)A$, $v(F'^a)^T = A^{-1}v(F^a)^T$ and $v(F'^r)^T = A^{-1}v(F^r)^T$, and $N'$ accepts the same language as $N$.
\end{theorem}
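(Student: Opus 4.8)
The plan is to reduce everything to the weight computation already carried out for XNFA in Section~\ref{subsec:xnfa}. The first step is to phrase the SV-condition (and acceptance) purely algebraically. For a word $w = \sigma_1\cdots\sigma_k$ write $M_w = \prod_{i=1}^{k} M_{\sigma_i}$ as before, and set $\Delta^a(w) = v(Q_0)M_wv(F^a)^T$ and $\Delta^r(w) = v(Q_0)M_wv(F^r)^T$. These are exactly the parities $|d\cap F^a|\bmod 2$ and $|d\cap F^r|\bmod 2$, where $d = \delta_D(Q_0,w)$ is the XDFA state reached by $w$. In this language the SV-condition is precisely the statement that $\Delta^a(w) + \Delta^r(w) = 1$ in GF(2) for every $w \in \Sigma^*$ (exactly one of the two parities is odd), while $w \in \mathcal{L}(N)$ iff $\Delta^a(w) = 1$. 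Note that, because we work with these counts modulo two and treat $v(F^a)$ and $v(F^r)$ as independent vectors, the possible overlap $F^a \cap F^r \neq \emptyset$ requires no special handling: a state in the intersection simply contributes to both $\Delta^a$ and $\Delta^r$.

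Next I would invoke the change-of-basis identity $M'_w = A^{-1}M_wA$ established in Section~\ref{subsec:xnfa} (which, as noted there, does not depend on the size of the alphabet), and apply the weight computation twice: once with $v(F^a)$ and $v(F'^a)^T = A^{-1}v(F^a)^T$ in place of $v(F)$, and once with $v(F^r)$ and $v(F'^r)^T = A^{-1}v(F^r)^T$. Exactly as in that computation, the factors $A$ and $A^{-1}$ cancel, yielding $\Delta'^a(w) = v(Q_0)A(A^{-1}M_wA)A^{-1}v(F^a)^T = \Delta^a(w)$ and, likewise, $\Delta'^r(w) = \Delta^r(w)$, for every $w \in \Sigma^*$. (The defining vectors are all well-defined characteristic vectors over GF(2), and $v(Q'_0) = v(Q_0)A \neq 0$ since $A$ is non-singular, so $Q'_0 \neq \emptyset$.)

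From this the theorem follows immediately. Since $\Delta'^a(w) + \Delta'^r(w) = \Delta^a(w) + \Delta^r(w) = 1$ for all $w$, the automaton $N'$ satisfies the SV-condition, so it is a genuine SV-XNFA; and since $w \in \mathcal{L}(N') \iff \Delta'^a(w) = 1 \iff \Delta^a(w) = 1 \iff w \in \mathcal{L}(N)$, the two automata accept the same language. I would close with the remark that $N'$ lies in the class of SV-XNFA considered in this paper: each $M'_{\sigma_i} = A^{-1}M_{\sigma_i}A$ is non-singular because $M_{\sigma_i}$ is, and conjugation preserves the characteristic polynomial of each $M_{\sigma_i}$, hence also the cycle structure on each symbol, so no transient heads are introduced.

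I do not anticipate a serious obstacle: the whole content is the observation that the SV-condition is a property of the two scalar functions $\Delta^a$ and $\Delta^r$, each of which is individually invariant under a change of basis. The only step needing any care is the first one, namely rephrasing the SV-condition and acceptance entirely in terms of these weights, so that the already-established invariance of the weights does all the work.
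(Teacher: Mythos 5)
Your proposal is correct and follows essentially the same route as the paper: the paper likewise defines the two weight functions $accept(w) = v(Q_0)M_wv(F^a)^T$ and $reject(w) = v(Q_0)M_wv(F^r)^T$, observes that the SV-condition is $accept(w)\neq reject(w)$, and cancels the $A$ and $A^{-1}$ factors exactly as you do to conclude both functions are invariant under the change of basis. Your additional remarks (non-emptiness of $Q'_0$, preservation of non-singularity and cycle structure) go slightly beyond what the paper records but are consistent with it.
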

\begin{proof}
In the discussion in Section~\ref{subsec:xnfa} we showed that for XNFA, the change of basis described on an XNFA $N$ that results in $N'$, $\Delta'(w) = \Delta(w)$. We extend this to SV-XNFA by defining two new functions. Recall that $M_w$ represents the sequence of matrix multiplications for some $w$ of length $k$, and that $M'_w = A^{-1}M_wA$. Then, let
\begin{align*}
accept(w) &= v(Q_0)M_w v(F^a)^T \\
reject(w) &= v(Q_0)M_w v(F^r)^T\ \ .
\end{align*}
The SV-condition is that $accept(w) \neq reject(w)$ for any $w \in \Sigma^*$. Similar to $\Delta(w)$, we have
\begin{align*}
accept'(w) &= v(Q'_0)M'_w v(F'^a)^T \\
           &= v(Q_0)A (A^{-1} M_w A) A^{-1}v(F^a) \\
           &= v(Q_0) M_w v(F^a) \\
           &= accept(w)
\end{align*}
and
\begin{align*}
reject'(w) &= v(Q'_0)M'_w v(F'^r)^T \\
           &= v(Q_0)A (A^{-1} M_w A) A^{-1}v(F^r) \\
           &= v(Q_0) M_w v(F^r) \\
           &= reject(w)
\end{align*}
Clearly, the SV-condition is met by $accept'$ and $reject'$, and so $N'$ is an SV-XNFA that accepts the same language as $N$.
\end{proof}
The number of non-singular $n \times n$ matrices over GF(2) (including the identity matrix) is $|GL(n,\mathbb{Z}_2)| = \prod_{k=0}^{n-1} (2^n-2^k)$, and so, up to isomorphism, for any SV-XNFA at most another $|GL(n,\mathbb{Z}_2)|-1$ equivalent SV-XNFA can be found.
\begin{example}
\label{ex:binary-change-basis}
Let $N$ be an SV-XNFA with
alphabet $\Sigma = \{a,b,c\}$, and the following transition matrices: $M_a$ is the normal form matrix of $c(X) = X^4 + X^3 + X^2 + 1$, $M_b$ is the normal form matrix of $X^4 + X^3 + X + 1$, and $M_c$ is the normal form matrix of $c(X) = X^4 + X^2 + X + 1$. Let $Q_0 = \{q_0\}$,  $F^a = \{q_0,q_2\}$ and $F^r = \{q_1,q_3\}$. Figure~\ref{fig:trinary-xnfa-normal} shows $N$ and the equivalent XDFA $N_D$ is given in Figure~\ref{fig:binary-xdfa-normal}, where a double edge indicates an accept state and a thick edge indicates a reject state. Consider the following matrix $A$:
\begin{displaymath}
A=\left[\begin{array}{cccc} 
0 & 1 & 1 & 1 \\ 
1 & 0 & 1 & 0 \\ 
1 & 1 & 0 & 0 \\
0 & 1 & 0 & 1
\end{array}\right]\ \ \ .
\end{displaymath}
We use $A$ to make a change of basis from $N$ to $N'$. Let $N'$ be an XNFA with $\Sigma=\{a,b,c\}$, where $M'_a = A^{-1}M_a A$, $M'_b = A^{-1}M_b A$ and $M'_c = A^{-1}M_c A$. Furthermore, let $v(Q'_0) = v(Q_0)A$, i.e. $Q'_0 = \{q_1,q_2,q_3\}$. Finally, let $v(F'^a)^T = A^{-1}v(F^a)^T$ and $v(F'^r)^T = A^{-1}v(F^r)^T$, i.e. $F'^a = \{q_0,q_2\}$ and $F'^r = \{q_2,q_3\}$. Figure~\ref{fig:trinary-xnfa-change}, shows $N'$, with a double edge indicating an accept state, a thick edge indicating a reject state and a thick double edge indicating a state that is both an accept state and a reject state. Figure~\ref{fig:binary-xdfa-change} gives the equivalent XDFA $N'_D$. It is worth noting that, although $N'$ has a different structure than $N$, $N'_D$ has the same structure as $N_D$, and accepts the same language. Also, note that in $N'_D$, the state $\{q_0,q_1,q_2\}$ is a reject state, because it contains an even number of accept states, namely $q_0$ and $q_2$, but an odd number of reject states, namely $q_2$.
\end{example}
\begin{figure}
\begin{adjustbox}{valign=t}
\begin{minipage}[t]{0.45\linewidth}
\centering
\includegraphics[width=0.85\textwidth]{example20xnfa.tikz}
\caption{Example~\ref{ex:binary-change-basis}, $N$}
\label{fig:trinary-xnfa-normal}
\end{minipage}
\end{adjustbox}
\hspace{0.5cm}
\begin{adjustbox}{valign=t}
\begin{minipage}[t]{0.45\linewidth}
\centering
\includegraphics[width=0.75\textwidth]{example20xdfa.tikz}
\caption{Example~\ref{ex:binary-change-basis}, $N_D$}
\label{fig:binary-xdfa-normal}
\end{minipage}
\end{adjustbox}
\end{figure}
\begin{figure}
\begin{adjustbox}{valign=t}
\begin{minipage}[t]{0.45\linewidth}
\centering
\includegraphics[width=\textwidth]{example21xnfa.tikz}
\caption{Example~\ref{ex:binary-change-basis}, $N'$}
\label{fig:trinary-xnfa-change}
\end{minipage}
\end{adjustbox}
\hspace{0.5cm}
\begin{adjustbox}{valign=t}
\begin{minipage}[t]{0.45\linewidth}
\centering
\includegraphics[width=0.75\textwidth]{example21xdfa.tikz}
\caption{Example~\ref{ex:binary-change-basis}, $N'_D$}
\label{fig:binary-xdfa-change}
\end{minipage}
\end{adjustbox}
\end{figure}
%
%
\section{Conclusion}
\label{sec:conc}
We have given an upper bound of $2^n-1$ on the state complexity of SV-XNFA with alphabets larger than one, and a lower bound of $2^{n-1}$. We have also shown that, given any SV-XNFA with $n$ states, it is possible, up to isomorphism, to find at most another $|GL(n,\mathbb{Z}_2)|-1$ equivalent SV-XNFA via a change of basis.
\bibliographystyle{eptcs}
\bibliography{svxnfa}

\end{document}